\documentclass[a4paper]{article}

\usepackage[pages=all, color=black, position={current page.south}, placement=bottom, scale=1, opacity=1, vshift=5mm]{background}

\usepackage[margin=1in]{geometry} 

\usepackage{amsmath}
\usepackage{amsthm}
\usepackage{amssymb}
\usepackage{ bbold }
\usepackage{xcolor}
\usepackage{multirow}
\usepackage{booktabs}
\usepackage{tcolorbox}

\usepackage{listings}

\input{fitch.sty}

\usepackage[utf8]{inputenc}
\usepackage{hyperref}
\hypersetup{
	unicode,
	pdfauthor={Reid Dale},
	pdftitle={Use and Evaluation of Synthetic Data},
	pdfsubject={A simple article template},
	pdfkeywords={article, template, simple},
	pdfproducer={LaTeX},
	pdfcreator={pdflatex}
}

\usepackage[sort&compress,numbers,square]{natbib}
\theoremstyle{plain}
\newtheorem{theorem}{Theorem}

\theoremstyle{definition}
\newtheorem{definition}[theorem]{Definition}
\newtheorem{example}[theorem]{Example}

\usepackage{tikz}
\usetikzlibrary{matrix}

\usepackage{graphicx, color}
\graphicspath{{fig/}}

\usepackage{algorithm, algpseudocode} 
\usepackage{mathrsfs} 

\usepackage{lipsum}

\title{Synthetic Data, Information, and Prior Knowledge: Why Synthetic Data Augmentation to Boost Sample Doesn't Work for Statistical Inference}
\author{Reid Dale$^1$, Jordan Rodu$^2$,Mike Baiocchi$^3$}

\date{
	$^1$Stanford University School of Medicine Department of Cardiothoracic Surgery \\ \texttt{reiddale@stanford.edu}\\%
    $^3$University of Virginia Department of Statistics\\
    \texttt{jsr6q@virginia.edu}
    \\
    $^2$Stanford University Department of Epidemiology $\&$ Population Health \\ \texttt{baiocchi@stanford.edu}
}

\begin{document}
	\maketitle
	
	\begin{abstract}
    The use of synthetic data to deidentify data and to improve predictive models is well-attested to. The augmentation of datasets using synthetically generated data is an alluring proposition: in the best case, it generates realistic data \textit{in silico} at a fraction of the cost of authentic data which may be found \textit{in vivo} or \textit{in vitro}. This poses novel epistemic challenges.

    We contend that synthetic data augmentation is best understood as a novel way of accounting for prior knowledge. In this manuscript, we propose a definition of synthetic distributions and analyze how synthetic data augmentation interplays with standard accounts of  maximum likelihood and Bayesian estimation. We observe that the marginal Fisher information contributed by synthetic data processes is subject to fundamental bounds, and enumerate obstacles to the use of synthetic data augmentation to aid in inferential tasks. 

    We then articulate a Bayesian formulation of the way that synthetic data augmentation can be coherently understood, but argue that naive approaches to the specification of the prior are epistemically unjustifiable. This suggests that enhanced scrutiny must be placed on identifying justifiable priors to warrant the use and inclusion of data drawn from specific synthetic distributions.

    While our analysis shows the challenges and limitations of using synthetic data augmentation to improve upon traditional statistical model reasoning, it does suggest that augmentation is the principal approach analysts using outcome reasoning (i.e. using train/test splits to justify the analysis) can constrain an otherwise high-dimensional model space, providing an alternative to trying to encode the constraints into the potentially complex architecture of the algorithm.




	\noindent\textbf{Keywords:} Synthetic Data, Fisher Information, Maximum Likelihood Estimation
	\end{abstract}

	\tableofcontents

\section{Introduction}

The generation of realistic-looking data enjoying similar statistical properties to some target real-world distribution is known as \textit{synthetic data generation}. There are a handful of distinct types of use cases for synthetic data in the literature. 

Perhaps its most commonly accepted use is in the generation of datasets $S$ from a dataset $X$ with the aim of limiting disclosure of sensitive information \cite{kim1986method}\cite{kim1995masking}, which we term a \textit{masking} of $X$. The construction of masked counterparts can be motivated by its ability to enable research teams to access realistic datasets on which statistical tasks may be performed while being compliant with privacy standards established by HIPAA, GDRP, among others. Perhaps most notably, the United States Census Bureau has been working on masking techniques for use with its American Community Survey (ACS) \cite{rodriguez2023confidentiality}\cite{freiman2018formal}\cite{reynolds2023synthetic}\cite{daily2022disclosure}. 

A different use case centers on getting human-level insight into the generation process. For example, Chatbot Arena is a platform wherein a synthetic datum can be generated, and a human can indicate their preference in responses \cite{chatbot_arena}. In this way, synthetic data can partially augment a dataset to improve model performance through the injection of human feedback as the source of novel information.

Another use of synthetic data can be to augment a dataset by boosting sample that conforms to some prior knowledge, such as the use of physics-informed synthetic data for applications in MRI reconstruction \cite{WANG2025103616}. In such applications, synthetic data is used to reduce the variance in the target model through constraining the space of models by weighting the architecture of a black-box model towards models that approximately accommodate expert knowledge of the underlying data generating process. In this way, synthetic data generation becomes a flexible way to infuse prior knowledge in a way that does not require the construction of custom model architectures. The ``assumptions'' enter through the synthetic data used to fit the model. As these examples illustrate, synthetic data use is both highly flexible and stratified in its use cases.

We draw a distinction between the first use case -- which we refer to as ``masking'' -- and the second two use cases -- which we refer to as ``augmentation.'' Masking is in fact a special form of augmentation, but where the purpose is to facilitate the transfer of sensitive information to another party.

The aim of this note is to provide an information-theoretic argument against unconstrained use of synthetic data \textit{augmentation} as a tool for investigators to perform statistical inference or create statistical models. The primary aims of this paper are to (a) propose a definition of synthetic data distributions amenable to mathematical analysis, and (b) explore the relationship between synthetic data augmentation and Fisher information. 

Our definition of synthetic data includes the nonparametric bootstrap as a key example. The reason that synthetic data augmentation is strongly limited in its ability to improve maximum likelihood estimation is the same reason that one cannot improve standard errors by sampling an arbitrary number of observations from a bootstrap distribution: all of the information is contained within the empirically drawn sample.

\section{Abstractly Defining Synthetic Data}

In this section we give a simple, flexible definition of synthetic data. The upshot is that we \textit{already} use and accept synthetic data in many contexts. Their use is typically grounded in the context of specific tasks where their usage can be justified on \textit{a priori} grounds.

\begin{definition}
Let $X = (X_1,\dots, X_n)$ be a sequence of observations from sample space $\Omega$. Let $\Xi$ be another sample space. A $\Xi$-valued synthetic distribution $\mathcal{S}$ is a function $\mathcal{S}: \bigcup_{n\in\mathbb{N}} \Omega^n \to \mathcal{P}(\Xi)$, where $\mathcal{P}(\Xi)$ is the set of probability distributions over $\Xi$.\footnote{More precisely, we assume that $\Xi$ is a \textit{measure space}, not just a set.}
\end{definition}

In other words, a synthetic distribution $\mathcal{S}$ is a function that assigns to any empirical sample drawn from $\Omega$ a probability distribution over the sample space $\Xi$.

Synthetic distributions are ubiquituous in both statistical inference and in machine learning. Perhaps the most famous synthetic distribution is the nonparametric bootstrap.

\begin{example}
Let $X = (X_1,\dots, X_n)$ be a sample from $\Omega$. The bootstrap distribution $\mathcal{B}: \bigcup_{n\in\mathbb{N}} \Omega^n \to \mathcal{P}(\Omega)$ associates to each $X$ the probability mass function
\begin{equation}
\begin{array}{ccc}
    p_{\mathcal{B}(X)}(x) & = & \frac{1}{n}\sum\limits_{i=1}^{n} \mathbb{1}_{X_i}(x)\\ 
    & = & \frac{\#\{x=X_i\,|\,X_i\in \mathbb{X}\}}{n}.
\end{array}
\end{equation}
\end{example}

Similarly, the parametric bootstrap is a synthetic distribution. 

There is no requirement that synthetic distributions be ``realistic'' in the sense that they consistently estimate the original joint distribution of the sample $X$ like the nonparametric bootstrap. In fact, many uses of synthetic data specifically avoid this requirement. For example, reweighting techniques are \textit{also} synthetic distributions.
\begin{example}
Let $X_i = (C_i,Y_i)$ where $C_i$ is a categorical variable. Let $X$ be a dataset. Then define the synthetic distribution $\mathcal{S}$ by first sampling a class $C_i$ uniformly in the sample's observed values of $C$, and then by sampling from the bootstrap on only those with $C_i = C$.

To be more precise, let $\mathbb{X}_C = \{X_i\,|\, C_i = C\}$ be the subset of the sample where $C_i = C$ and $n_C = \#\{C\,|\,\exists i \, C_i = C \}$ be the number of distinct values of $C$ observed in the sample. Then the reweighted probability mass function is given by:
\begin{equation}
    p(C,Y) = \frac{1}{n_C} p_{\mathcal{B}(\mathbb{X}_C)}(C,Y).
\end{equation}
\end{example}

Synthetic distributions can also be used to ensure certain geometric symmetries in the probability distribution that may not be present in the original data, for example by applying iterates of a periodic operator.

\begin{example}
Let $\mathbb{X}$ be a sample, and let $r:\Omega \to \Omega$ be a periodic function with period $n$. Then let 
\begin{equation}
    p(r^k(x)) = \frac{1}{n} p_{\mathcal{B}(X)}(x).
\end{equation}
In other words, first select the iterate of $r$ to apply uniformly in $\{1,\dots, n\}$ and then select $x$ from the bootstrap distribution of the original sample. 

A simple example of this kind of distribution is to let $\Omega$ be a space of square images $x$ with label $y$, and $r$ be rotation by $90^{\circ}$, which has period $4$.

\end{example}

This kind of synthetic distribution may be helpful in improving the generalizability of image classifying models, since it can ensure that concepts like ``sky'' and ``ocean'' aren't tied to their vertical height or orientation in an image.

In each of the above cases, the use of synthetic data might be suitable for specific tasks:
\begin{enumerate}
    \item The theory of the bootstrap for nonparametric estimation and testing is well-developed,
    \item The theory of reweighting to give unbiased estimates of estimands commonly of interest in causal inference is well attested to, and
    \item The use of synthetic data to reduce the risk of overfitting due to geometric artifacts in the sample is common in object detection.
\end{enumerate}

A common thread in these uses is the presence of \textit{a priori} justification for the use of these techniques specific to the task at hand. As we will discuss in the next section, the No Free Lunch theorems and the bias-variance decomposition in machine learning partially explain \textit{why} the use of synthetic data should be justified at the task level. 

\section{Synthetic Data and Fisher Information: A Caution}

In general, synthetic data do not add information to the system to bolster maximum likelihood based inference in recovering the original parameter $\theta$. Intuitively, this is due to how prior knowledge is handled in maximum likelihood techniques. The knowledge of the underlying data generating process is encoded via the choice of parametric family of distributions used to pose the optimization problem in maximum likelihood estimation; namely, $\hat{\theta} = \text{argmax}_{\theta \in \Theta} \mathcal{L}(X)$ for empirical observations $X$. By contrast, synthetic data augmentation leverages prior knowledge in an essentially orthogonal way, using data to construct a distribution on which instances are drawn via simulation to add sample to a database. 

We demonstrate below how synthetic data augmentation (and its mode of incorporating prior knowledge) interplays with maximum likelihood techniques. A synthetic  distribution adds \textit{no} marginal Fisher information to the sequence of empirical observations, while the Fisher information a synthetic distribution can contribute to a disjoint sample is bounded by the Fisher information of the original sample.  


We consider a sequential use of synthetic data: first, $n$ units $X_1,\dots, X_n$ are drawn iid from a distribution $\theta$ for some $\theta \in \Theta$. Then, a synthetic distribution $\mathcal{S}(X)$ is constructed and $S_i\sim \mathcal{S}(X)$ are drawn iid. We wish to evaluate the relationship between the Fisher information contained in the sample $X$ and the information $(X,S)$ under the above sampling procedure.

\begin{theorem} \label{thm:unconditional}
Let $\Theta$ be a space of distributions and $X_i \sim \theta$ be an iid sample from some $\theta\in \Theta$. Let $\mathcal{S}(X)$ be a distribution whose parameters depend only on the sample $X$, and let $S_i\sim \mathcal{S}(X)$ be iid. Then with respect to $\Theta$, we have 
\begin{equation}
    I_{X,S}(\theta) = I_{X}(\theta).
\end{equation}
Equivalently, $I_{S|X}(\theta) = 0$.
\end{theorem}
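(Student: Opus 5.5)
The plan is to factor the joint likelihood of $(X,S)$ and observe that the synthetic factor carries no dependence on $\theta$. Under the sequential sampling scheme, $X=(X_1,\dots,X_n)$ has density $p_\theta(x)=\prod_{i=1}^n p_\theta(x_i)$. Conditionally on $X=x$, the synthetic draws $S=(S_1,\dots,S_m)$ are iid from $\mathcal{S}(x)$. Because $\mathcal{S}$ is a function of the sample alone, the conditional density $q(s\mid x)=\prod_{j=1}^m q_{\mathcal{S}(x)}(s_j)$ does not involve $\theta$. Hence the joint density factors as
\begin{equation*}
p_\theta(x,s) = p_\theta(x)\, q(s\mid x).
\end{equation*}
Here we tacitly fix a dominating measure on $\Xi$ so that $q$ exists, or we work with Radon--Nikodym derivatives. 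We also assume the standard regularity conditions under which Fisher information is defined, namely differentiability in $\theta$ and interchange of differentiation and integration.

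Taking logarithms and differentiating in $\theta$, the score of the pair is
\begin{equation*}
\nabla_\theta \log p_\theta(X,S) = \nabla_\theta \log p_\theta(X) + \nabla_\theta \log q(S\mid X) = \nabla_\theta \log p_\theta(X),
\end{equation*}
since the second term vanishes identically. The Fisher information of $(X,S)$ is the covariance (equivalently, the second moment) of this score under $p_\theta(x,s)$. The score is a function of $X$ only, so its expectation under the joint law reduces to the expectation under the marginal law of $X$. This gives $I_{X,S}(\theta)=I_X(\theta)$.

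For the equivalent formulation, we use the chain rule for Fisher information, $I_{X,S}(\theta)=I_X(\theta)+I_{S\mid X}(\theta)$. Here $I_{S\mid X}(\theta)=\mathbb{E}_\theta\bigl[\mathrm{Var}(\nabla_\theta\log q(S\mid X)\mid X)\bigr]$, which is zero because the conditional score is identically zero. The chain rule itself follows from the same factorization together with the vanishing cross term $\mathbb{E}_\theta[\nabla\log p_\theta(X)\,\nabla\log q(S\mid X)^{\top}]$, and that cross term is trivially zero here.

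The main obstacle is measure-theoretic rather than computational. The synthetic distribution $\mathcal{S}(x)$ can be singular; the bootstrap, for instance, is a point-mass mixture supported on the observed sample. A single dominating measure for $q(\cdot\mid x)$ that works across all $x$ may therefore not exist. To handle this, I would argue at the level of likelihood ratios rather than densities. For $\theta,\theta'$, the Radon--Nikodym derivative of the joint law $P_{\theta'}\otimes\mathcal{S}$ with respect to $P_\theta\otimes\mathcal{S}$ equals $\frac{dP_{\theta'}}{dP_\theta}(x)$, because both joint laws share the same Markov kernel $x\mapsto\mathcal{S}(x)$. Differentiating this ratio at $\theta'=\theta$ gives the same score as above. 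This approach also makes it transparent that $X$ is sufficient for $\theta$ in the model for $(X,S)$.
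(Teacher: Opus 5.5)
Your proposal is correct and takes essentially the same route as the paper: factor the joint likelihood as $p_\theta(x)\,q(s\mid x)$ with the synthetic factor free of $\theta$, conclude $I_{X,S}(\theta)=I_X(\theta)$, and read off $I_{S\mid X}(\theta)=0$ via the chain rule. The differences are minor refinements. You compute the information as the second moment of the score rather than as the paper's expected negative Hessian. Your Markov-kernel argument, which dominates the joint law by $\mu\otimes\mathcal{S}$, properly handles singular synthetic distributions such as the bootstrap, a case the paper's density-based computation glosses over.
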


This theorem demonstrates that the use of synthetic data generates zero \textit{marginal} Fisher information---measured by $I_{S|X}(\theta)$---about the original distribution $\theta$ beyond that contained in the sample $X$ used to generate the synthetic distribution. As we will see, this also entails a bound on the unconditional Fisher information $I_{S}(\theta)$.

\begin{proof}
The joint distribution $q_{\theta}(X,S)$ under this sampling procedure is given by 
\begin{equation}
    q_{\theta}(X,S) = \prod p_{\theta}(X_i) \times \prod q_{\theta|X}(S_i)
\end{equation}
where $p_{\theta}$ is the probability density function of $X$ in $\theta$ and $\prod q_{\theta|X}$ is the density conditional on the sample $X$. Furthermore, by construction of $\mathcal{S}(X)$ the \textit{conditional} density $q_{\theta|X}(S_i) = p_{\mathcal{S}(X)}(S_i)$ does not depend on $\theta$. Hence

\begin{equation}
    q_{\theta}(X,S) = \prod p_{\theta}(X_i) \times \prod p_{\mathcal{S}(X)}(S_i)
\end{equation}

To calculate the Fisher information we decompose the likelihood function as follows:

\begin{equation}
\begin{array}{ccc}
     \mathcal{L}_{XS}(\theta) & = & \mathcal{L}_{X}(\theta)\mathcal{L}_{X|S}(\theta) \\
     & = &\mathcal{L}_{X}(\theta) \times \underbrace{\prod p_{\mathcal{S}(X)}(S_i)}_{c_{X}(S)} \\
     & = & c_{X}(S) \mathcal{L}_{X}(\theta),
\end{array}
\end{equation}

Observe that the term $c_{X}(S)$ is constant in $\theta$, as conditional in $X$ there is no dependence of the probability of $S$ on $\theta$.

Taking logarithms we have 
\begin{equation}
\ell_{X,S}(\theta) = \log(c_{X}(S)) + \ell_{X}(\theta).
\end{equation}
The term $\log(c_{X}(S))$ vanishes when evaluating its partial derivative in $\theta$, so that
\begin{equation}
    \frac{\partial^2}{\partial \theta^2} \ell_{X,S}(\theta) = \frac{\partial^2}{\partial \theta^2} \ell_{X}(\theta)
\end{equation}
and hence by the equality $I_{\widetilde{D}}(\theta) = -\widetilde{E}\left[\frac{\partial^2}{\partial \theta^2} \ell_{\widetilde{D}}(\theta)\right]$ (see \cite{keener_theoretical_2010} equation 4.16)

\begin{equation}
\begin{array}{ccc}
I_{X,S}(\theta) &=& -\widetilde{E}\left[\frac{\partial^2}{\partial \theta^2} \ell_{X,S}(\theta)\right] \\ 
& = & -\widetilde{E}\left[\frac{\partial^2}{\partial \theta^2} \ell_{X}(\theta)\right]\\
& = & I_{X}(\theta).
\end{array}
\end{equation} 

Consequently
\begin{equation} I_{X,S}(\theta)  = I_{X}(\theta)
\end{equation}

By the chain rule of Fisher information (see, for example, \cite{zamir_proof_1998}),  

\begin{equation}
\begin{array}{ccc}
I_{X,S}(\theta) & = &  I_{X}(\theta) + I_{S|X}(\theta)\\
& = & I_{X,S}(\theta) + I_{S|X}(\theta)
\end{array}
\end{equation}
Hence
\begin{equation}
I_{S|X}(\theta)  =  0 
\end{equation}
as desired.
\end{proof}

Thus, drawing extra sample $S$ from the synthetic distribution $\mathcal{S}(X)$ adds no \textit{marginal} Fisher information about the underlying parameter $\theta$. In other words, $I_{S|X}(\theta) = 0$. 


\begin{example} Let $X_i \sim \theta$ be iid.
Observe that for the nonparametric bootstrap $\mathcal{S}(X) = \mathcal{B}(X)$, for every $\theta \in \Theta$ we have that $q_{\theta|X}(S_i) = \frac{1}{n}$. 

Thus, 
\begin{equation}
    \mathcal{L}_{X,S}(\theta) =\frac{1}{n^k} \prod\limits_{i=1}^n p_{\theta}(X_i).
\end{equation}

So in log-likelihood we have 
\begin{equation}
    \ell_{X,S}(\theta) = -k\log(n) + \sum \log(p_{\theta}(X_i)).
\end{equation}
Taking second derivatives of $\ell$ results in the vanishing of the constant $-k\log(n)$ term, so that 
\begin{equation}
I_{X,S}(\theta)  = I_{X}(\theta) = nI_{X_1}(\theta).
\end{equation}
\end{example}

This does \textit{not} mean that, in isolation, $S_i$ sampled from a synthetic dataset $\mathcal{S}(X)$ have zero Fisher information:

\begin{example}
    Let $S_i \sim \mathcal{B}(X)$ with $X_i\sim \theta$ be an iid sample of size $n$. Then 
\begin{equation}
        I_{S}(\theta) \leq nI_{X_1}(\theta)
\end{equation}

That is, \textit{if} all you know about the $S_i$ is that they were generated from a bootstrap resample $\mathcal{B}(X)$ from $\mathcal{D}$ and you know the size of the sample $X$ used to generate the bootstrap, you get an upper bound on $X$. A lower bound on Fisher information is the number of \textit{distinct} units $S_i$ observed.
\end{example}

But what if you are merely the \textit{recipient} of synthetic data $S$ and are not given the original sample $X$? How much information can be gleaned from $S$ about $\theta$? The preceding result and the chain rule for Fisher information can be used to bound the unconditional Fisher information:

\begin{theorem} \label{thm:conditional} Suppose that $X$ were drawn from $\theta \in \Theta$, $\mathcal{S}(X)$ is a synthetic distribution, and let $S_i\sim \mathcal{S}(X)$ be iid. Then the Fisher information in $\theta$ of $S$ satisfies
\begin{equation}
    I_{S}(\theta) \leq I_{X}(\theta)
\end{equation}
\end{theorem}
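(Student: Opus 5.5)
The plan is to apply the chain rule for Fisher information (as cited above via \cite{zamir_proof_1998}) to the pair $(X,S)$ in the opposite order from the proof of Theorem~\ref{thm:unconditional}. That is, we expand the joint information first through $S$ and then through $X$ given $S$:
\begin{equation}
I_{X,S}(\theta) = I_{S}(\theta) + I_{X|S}(\theta).
\end{equation}
Theorem~\ref{thm:unconditional} has already established $I_{X,S}(\theta) = I_{X}(\theta)$. Substituting this gives
\begin{equation}
I_{X}(\theta) = I_{S}(\theta) + I_{X|S}(\theta),
\end{equation}
so the desired inequality $I_S(\theta)\le I_X(\theta)$ reduces to the nonnegativity of the conditional Fisher information $I_{X|S}(\theta)$.

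For that nonnegativity, I would write $I_{X|S}(\theta) = E_\theta\!\left[\left(\frac{\partial}{\partial\theta}\log q_\theta(X\mid S)\right)^2\right]$. This is the expectation of a square, hence $\geq 0$. Equivalently, it is the expected (over $S$) Fisher information of the conditional family $\{q_\theta(\cdot\mid S)\}_\theta$, and each such information is the variance of a score. The version of the chain rule in Zamir's paper defines $I_{X|S}$ precisely in this form, so nonnegativity follows immediately from the definition. An alternative route, which avoids the conditional quantity altogether, is to note that $S$ is a (randomized) statistic of $X$: it is obtained from $X$ through the Markov kernel $\mathcal{S}$, which is free of $\theta$. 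The data-processing inequality for Fisher information then gives $I_S(\theta)\le I_X(\theta)$ directly. I would mention this as a remark confirming the result.

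The main obstacle is regularity. The chain rule and the identity $I=-E[\partial^2\ell]$ used in Theorem~\ref{thm:unconditional} both require the usual conditions: differentiation under the integral sign, and existence of the densities $q_\theta(S)=\int \prod p_\theta(x_i)\prod p_{\mathcal{S}(x)}(S_i)\,dx$ and $q_\theta(X\mid S)$ with finite second moments of their scores. I would state that we assume the same regularity conditions under which Theorem~\ref{thm:unconditional} was proved, and additionally that the marginal density of $S$ is differentiable in $\theta$ with the score interchange valid. Under these assumptions the argument is just the two displayed lines above together with $I_{X|S}(\theta)\ge 0$. As a sanity check, equality holds exactly when $X$ carries no score information beyond $S$, for instance when $\mathcal{S}(X)$ is a point mass at a sufficient statistic.
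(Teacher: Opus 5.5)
Your proposal is correct and is essentially the paper's own proof: expand $I_{X,S}(\theta)=I_{S}(\theta)+I_{X|S}(\theta)$ by the chain rule, substitute $I_{X,S}(\theta)=I_{X}(\theta)$ from Theorem~\ref{thm:unconditional}, and conclude from $I_{X|S}(\theta)\ge 0$. Your data-processing remark and your explicit regularity assumptions are sensible additions that the paper leaves implicit.
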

\begin{proof}
By the chain rule of Fisher information, we have that 
\begin{equation}
     I_{X,S}(\theta) = I_{S}(\theta) + I_{X|S}(\theta).
\end{equation}

By our previous result, we know that $I_{X,S}(\theta) = I_{X}(\theta)$. Since Fisher information is non-negative we can conclude
\begin{equation}
\begin{array}{ccc}
    I_{S}(\theta) & \leq &  I_{S}(\theta) + I_{X|S}(\theta) \\
     & = & I_{X,S}(\theta) \\
     & = & I_{X}(\theta),
\end{array}
\end{equation}
establishing the inequality.
\end{proof}

Thus, the Fisher information contributed by a synthetic sample is bounded above by that contributed by the original sample $X$. There is no free lunch in inference from the use of synthetic data.

Do note, however, that to use the sample from the synthetic dataset to recover information about $\theta$ generally requires knowledge of how $\mathcal{S}$ depends on the sample $X$. For example, if $\mathcal{S}(X)$ does not depend on $X$ (i.e. $\mathcal{S}(X) = \mathcal{N}(0,1)$ for all $X$) then no information about $\theta$ can be recovered. In other words, the lower bound $I_{S}(\theta) \geq 0$ can be sharp when $S_i\sim \mathcal{S}(X)$. 

One practical application of this result is that unless you know the \textit{original} sample size of subjects meeting your inclusion criteria used in the construction of the synthetic data set $\mathcal{S}(X)$ you are not guaranteed to have much Fisher information about your target distribution. In ``masking'' applications of synthetic data generation, this information \textit{is} readily available and helps ground the validity of its use.

This issue is compounded if the data used to train the synthetic data $\mathcal{S}$ is not disjoint from the genuine dataset $X$ that you supplement it with. 
So, not only do you need to know \textit{how much} data from $\theta$ is inside the training set for $\mathcal{S}$, but also its relation to the genuine data $X$ that you already have available to you.

\section{An Analogue in Bayesian Estimation} \label{sec:bayes_vanilla}

An important class of synthetic distributions are distributions arising as the posteriors of some distributions given a (possibly small) amount of data. Suppose that we are given a prior $\mathbb{P}$ and condition it on the genuine data set $X$ to yield the posterior $\mathbb{P}(X)$. This gives a well defined synthetic distribution $\mathcal{S}(X) = \mathbb{P}_{|X}$.

On the Bayesian account, there is no information gain in sampling from the posterior $\mathbb{P'} = \mathbb{P}(\overline{X})$, as described by the Bayesian reflection principle as stated in \cite{goldstein_prevision_1983}. Formally, for every event $E$ and possible observation $Y$
\[\mathbb{E}_{\mathbb{P'}}[\mathbb{P}'(E|Y)] = \mathbb{P}'(E).\]
In other words, the posterior contains \textit{all} possible information about the distribution given the observations $X$, so that conditionally chaining on synthetic data drawn from the same dataset do not change the posterior 
In particular, synthetic observations drawn from $\mathbb{P}'$ yield no information beyond that contained in $\mathbb{P}'$.

\section{Synthetic Data Augmentation Isn't Categorically Bad; It's Just Different}

The above analysis is intended to clarify the relationship between synthetic data \textit{augmentation}---namely, the use of synthetically generated data from some source to boost sample size---in inferential tasks common to frequentist and Bayesian analyses. The analysis indicates that there is not a neat way to integrate the prior knowledge encoded by synthetic distributions into maximum likelihood or Bayesian approaches. This is due to the fact that prior knowledge is already strongly encoded into the techniques of maximum likelihood estimation and the Bayesian prior, respectively. Borrowing terminology from \cite{rodu2023black}, synthetic data augmentation is an \textit{outcome reasoning} approach to encoding \textit{a priori} information in the fitting of models; its use is obviated in traditional statistical modeling.

\begin{example}
Let $X$ be a sample of iid coin flips, Bernoulli(p) with outcomes {H,T}. The prior knowledge the analyst wishes to encode is that there is a censoring mechanism that drops every second H. In the maximum likelihood approach, this information would be used to modify the likelihood function. Similarly, for the Bayesian modeling approach the prior would be modified to encode censoring's impact on the observed data. 

Synthetic data augmentation permits manipulating the analytic data set in order to encode the information, perhaps by augmenting the observed $X$ with additional H.
\end{example}

This is not to say that synthetic data cannot play a role in data analysis; rather, its use must be carefully considered. The use of synthetic data to \textit{mask} sensitive data for research use is a well-understood process that by its very construction warrants the use of standard inferential techniques. 

For synthetic data \textit{augmentation}, better use cases appear to be when the analyst does not have analytic descriptions of how the data generating process maps to the estimation procedure. Two examples of this: 
\begin{enumerate}
    \item The use of synthetic data in human-in-the-loop workflows which are designed to extract information from a human labeler who audits the outputs of the synthetic distribution. In this case, modeling the complex human decision process is bypassed in the modeling step and the information in instead encoded in the augmented data set.
    \item The use of theory-informed synthetic data augmentation to corral a model to be closer to some theoretically justified functional form is a way to reduce variance in the model and flexibly account for prior knowledge in an approximate way. An example of this is rotating images to encode rotation-invariance in an object detection algorithm, where the object detection algorithm is treated as a black box.
\end{enumerate}

In each of these cases, the mode of information extraction using the synthetic distribution is tailored to a specific task. There is no ``off the shelf'' synthetic distribution process $\mathcal{S}$ that warrants its use in all tasks, especially in the context of inference. 

When an analyst is using \textit{outcome reasoning} (roughly speaking: using the common task framework\cite{donoho201750}, leaderboards, or more informally train/test splits to reason about the performance of their algorithms) then synthetic data augmentation may be the only means available for encoding the analyst's beliefs into the analysis. As described in \cite{rodu2023black}, the more traditional statistical forms of reasoning -- \textit{model reasoning}, \textit{warranted reasoning} -- require mathematical statements to encode beliefs. Synthetic data augmentation is not bad; it is just different from the traditional ways of encoding beliefs. Being able to distinguish between forms of reasoning aids the analyst, and critics, in locating where beliefs are being encoded in the analysis.

\subsection{A Bayesian Reframing of Synthetic Data Augmentation}

While \ref{sec:bayes_vanilla} argues that synthetic data augmentation alone cannot coherently shift the posterior distribution when the synthetic data is generated from data already known to the investigator, the most common use case for synthetic data augmentation would be to source synthetic data from a third party to include in the Bayesian posterior update. The specification of the prior does not have to treat the \textit{de novo} data $X$ and synthetically generated data $S$ on the same footing. The prior can incorporate the fact that $X$ and $S$ were generated via different processes and can reflect whatever knowledge or ignorance the investigator has regarding the construction and training of the synthetic distribution from which $S$ was drawn. In notation:
\[ \mathbb{P}(H| (X \text{ is observed real data }) \& \, (S \text{ is synthetically generated data from } \mathcal{S}.))\]
The condition ``$(X \text{ is observed real data }) \& \, (S \text{ is synthetically generated data from } \mathcal{S}.)$'' is highly complex, requiring the prior $\mathbb{P}$ to decide on how to weight the real data $X$ against (a) the degree of belief in the synthetic distribution generating process $\mathcal{S}$ (which could be a black box!) (b) How to weight the limited synthetic sample drawn from $\mathcal{S}$, knowing that that sample may be a bad approximation of $\mathcal{S}$, and (c) how to weight the real data $X$ against the coarse approximation of $\mathcal{S}$ afforded by $S$.

One could take a simple approach and stipulate a prior $\mathbb{P}_{naive}$ that treats the synthetic data \textit{as though} it were on equal footing with real data and set
\[ \mathbb{P}(H| (X \text{ is observed real data } \& \, S \text{ is synthetically generated data from } \mathcal{S}.) := \mathbb{P}_{naive}(H|X \cup S).\]
 In some cases this may be justified on theoretical grounds, as in the case of data masking. But in many possible cases this is plainly bad modeling: if the cost of collecting data is expensive relative to the generation of synthetic data, it becomes all too easy to overwhelm the dataset with increasing number of draws from $\mathcal{S}$. Under $\mathbb{P}_{naive}$, that is \textit{precisely} the cost-minimizing way to update. Moreover, such an endeavor would generally be useless: if $\mathcal{S}$ is in the conjugate family of $\mathbb{P}_{naive}$, such posterior updates will just converge back to $\mathcal{S}$ anyways.
 
 Thus, to avoid falling into this degeneracy, $\mathbb{P}_{naive}$ \textit{cannot} be used. 
 
\subsection{Considerations When Using Synthetic Data From a Third Party}

Distilling insights from previous sections, we highlight two  considerations for an analyst thinking of using synthetic data generated by a third party. Here, we are drawing a distinction between $S$ generated from from a genuine data set $X$ that the analyst has full access to versus $S'$, generated from genuine data $X'$, where the analyst does not have direct access to $X'$. The analyst's intent is to use an analytic data set, $\widetilde{X}$, that is a merging of $\{X,S'\}$. 

The first considerations is that $X$ and $X'$ may not be disjoint because they share observational units. This can arise in situations like when an analyst at a hospital has access to local data, $X$, and is requesting synthetic data, $S'$, from a multi-hospital repository. This centralized repository builds $S'$ using $X'$ and $X'$ contains, possibly all, observational units in $X$. This complicates an accounting of the amount of information obtained by incorporating $S'$ in the analysis.

Second, in most cases, the synthetic data provided by the third party will be a combination of masking and augmentation. One value proposition for getting $S'$ is that $S'$ was constructed using more observational units than the data available to the analyst. Though this use case is not common, an analyst could merge their $X$ with an $S'$ that is merely a masking of the genuine $X'$. At one extreme, when an analysts has no data and receives a masked data set from a third party then the analyst can be said to have augmented the null set with the masked data. 

Before completing the discussion of the second consideration, we discuss of the term \textit{augmentation}. The term augmentation can be thought of as pointing to the additional rows obtained through the analyst's merging of $X$ and $S'$. But there is another meaning for \textit{augmentation} that has been developed in this manuscript's previous sections: synthetic data can be used to encode information about the analyst's beliefs about the data, or of the appropriate features of the model(s) being fit. This kind of information augmentation is common in the object detection literature, where synthetic data are used to incorporate the domain-information that there is often rotation invariance for an image. At one (implausible) extreme, is the possibility that the $S'$ from the third party is generated in a way that perfectly captures the generation process of $X$. Should that situation ever arise then the analyst will be left with only aleatoric uncertainty and analyses will largely reduce to basic data summaries. More realistically, it is incumbent upon the analyst to identify and articulate the domain-information that is usefully added through $S'$.

If synthetic data received from a third party are able to improve an analysis then it is likely through a combination of masking and augmentation. An analyst should be able to articulate what forms of information are entering their analysis from the third party's data set.

\section{Conclusion}

The above argument places strong limits on the use of synthetic data to augment a dataset for the purposes of standard maximum likelihood estimation. Theorem \ref{thm:unconditional} shows that one cannot augment ones \textit{own} data using any synthetic data augmentation technique, while Theorem \ref{thm:conditional} shows that the marginal information obtained through augmentation from a a third party's synthetic data requires knowledge of the underlying Fisher information of the data used to generate \textit{their} synthetic distribution to properly calibrate the total information. Theorem \ref{thm:conditional} does not prohibit using synthetic data techniques to \textit{mask} data for privacy considerations, because the required level of transparency can be achieved through disclosures from the parties \textit{masking} the data. Failing that knowledge, an investigator is liable to get incorrectly reduced standard errors.

We want to be clear that the use of synthetic data to augment predictive models is beyond the scope of this discussion and requires a different set of considerations.

\bibliography{refs}

\end{document}